\PassOptionsToPackage{unicode}{hyperref}
\PassOptionsToPackage{hyphens}{url}
\documentclass[
  12pt,
  a4paper,
]{article}
\usepackage{amsmath,amssymb}
\usepackage{amsthm}
\newtheorem{theorem}{Theorem}
\newtheorem{lemma}{Lemma}
\usepackage[a4paper,margin=27mm]{geometry}
\usepackage{iftex}
\ifPDFTeX
  \usepackage[T1]{fontenc}
  \usepackage[utf8]{inputenc}
  \usepackage{textcomp} % provide euro and other symbols
\else % if luatex or xetex
  \usepackage{unicode-math} % this also loads fontspec
  \defaultfontfeatures{Scale=MatchLowercase}
  \defaultfontfeatures[\rmfamily]{Ligatures=TeX,Scale=1}
\fi
\usepackage{lmodern}
\ifPDFTeX\else
\fi
\IfFileExists{upquote.sty}{\usepackage{upquote}}{}
\IfFileExists{microtype.sty}{% use microtype if available
  \usepackage[]{microtype}
  \UseMicrotypeSet[protrusion]{basicmath} % disable protrusion for tt fonts
}{}
\makeatletter
\@ifundefined{KOMAClassName}{% if non-KOMA class
  \IfFileExists{parskip.sty}{%
    \usepackage{parskip}
  }{% else
    \setlength{\parindent}{0pt}
    \setlength{\parskip}{6pt plus 2pt minus 1pt}}
}{% if KOMA class
  \KOMAoptions{parskip=half}}
\makeatother
\usepackage{xcolor}
\usepackage{color}
\usepackage{fancyvrb}

\DefineVerbatimEnvironment{Highlighting}{Verbatim}{commandchars=\\\{\}}
\newenvironment{Shaded}{}{}

\newcommand{\ExtensionTok}[1]{#1}

\newcommand{\NormalTok}[1]{#1}

\newcommand{\cliopt}[1]{\texttt{-{}-#1}}
\usepackage{longtable,booktabs,array}
\usepackage{calc} % for calculating minipage widths
\usepackage{etoolbox}
\makeatletter
\patchcmd\longtable{\par}{\if@noskipsec\mbox{}\fi\par}{}{}
\makeatother
\IfFileExists{footnotehyper.sty}{\usepackage{footnotehyper}}{\usepackage{footnote}}
\makesavenoteenv{longtable}
\providecommand{\tightlist}{%
  \setlength{\itemsep}{0pt}\setlength{\parskip}{0pt}}
\ifLuaTeX
  \usepackage{selnolig}  % disable illegal ligatures
\fi
\usepackage[numbers,sort&compress]{natbib}
\usepackage{bookmark}
\IfFileExists{xurl.sty}{\usepackage{xurl}}{} % add URL line breaks if available
\hypersetup{
  pdftitle={Dependency-Aware ROM/CBD Correctness Bounds for ML-KEM-768 at the Heuristic Failure Scale},
  hidelinks,
  pdfcreator={LaTeX via pandoc}}

\title{Dependency-Aware ROM/CBD Correctness Bounds for ML-KEM-768 at the
Heuristic Failure Scale}
\author{%
Aur\'elie Duriez\textsuperscript{1} and Christophe Tommasini\textsuperscript{*2}\\[0.75em]
\small \textsuperscript{1}netHsys SARL, Lille, France\\
\small \textsuperscript{2}Tommasini Conseil, Lille, France%
}
\date{}

\begin{document}
\maketitle
\begingroup
\renewcommand{\thefootnote}{\fnsymbol{footnote}}
\footnotetext[1]{Corresponding author: \texttt{contact@tommasini-conseil.com}}
\endgroup
\begin{abstract}
We certify an honest-decapsulation failure upper bound for ML-KEM-768 in an explicit random-function/centered-binomial (ROM/CBD) abstraction. Domain-separated public-matrix streams are modeled as independent uniform ring elements and secret/noise polynomials as independent CBD2 primitives; this is not an information-theoretic statement about the fixed SHAKE instantiation of FIPS 203. Recent formal assessments identify rigorous justification of ML-KEM's heuristic decapsulation-failure scale as an open problem; within the explicit ROM/CBD abstraction studied here, we obtain a dependency-preserving certified upper bound at that scale. The analysis preserves dependencies induced by the public matrix and by both ciphertext-compression terms. Its terminal chain has three components: an exact graph-coupled full-ideal reference for the joint $c_u/c_v$ residual; a proper-ideal bivariate Fourier transport whose rare $|T|\ge3$ branch is closed by an exhaustive three-factor anti-concentration replay; and exact bit-specific FIPS decoding events followed only by a 256-coordinate union bound. A formal partial-Fourier lemma makes the spectral-to-total-variation step explicit. The reduced rational certificate satisfies
\[
  \Pr[K'\neq K] \le P_* \le 2^{-164.81},
\]
with $-\log_2 P_*=164.810716201343121\ldots$. The $164.81$ threshold is exact but numerically tight: the certified exponent exceeds it by only about $0.0007162$ bit, and $164.82$ is not certified. The result is an upper bound for an arbitrary message fixed independently of the public and secret randomness, under honest encryption and decapsulation. It is not an exact DFR, not a fixed-SHAKE equivalence theorem, not a new IND-CCA reduction, and not an adaptive $\delta$-correctness result.
\end{abstract}
\noindent\textbf{Keywords:} ML-KEM; Kyber; decryption failure; correctness; centered binomial distribution; finite Fourier analysis; CRT ideals; exact computation; reproducibility.

\section{Introduction}\label{introduction}

ML-KEM, standardized in FIPS 203, inherits from Kyber a very small
probability that an honestly generated ciphertext decrypts to a message
different from the encapsulated one. FIPS 203 reports a failure
probability at approximately the \(2^{-164.8}\) scale for ML-KEM-768
\cite{fips203}. That reported scale is useful context, but it should
not be confused with the theorem proved here: our result is a certified
upper bound in an explicitly stated random-function/CBD abstraction. The
quantitative difficulty is that the compression errors in the two
ciphertext components are generated from algebraically related
quantities and are not naturally independent of the lattice noise
entering the correctness residual.

This distinction has become increasingly explicit in the literature.
Dependency effects in Ring/Module-LWE-type decryption failures were
studied well before ML-KEM standardization
\cite{danvers2019, fang2022}. Almeida et al.~give a machine-checked
correctness and IND-CCA treatment of ML-KEM in EasyCrypt
\cite{almeida2024}, but that formal security framework should not be
confused with a sharp dependency-preserving evaluation of the concrete
failure probability. Barbosa, Kannwischer, Lim, Schwabe, and Strub
separate a provable correctness route from the familiar heuristic
rounding distributions; for ML-KEM-768 their provable route is around
\(2^{-80}\), whereas the \(2^{-164}\)-scale quantity belongs to the
simplified rounding model \cite{barbosa2025}. Most closely related to
our algebraic route, Qayyum and Bezzateev develop a certified
ideal-stratification/Fourier anti-concentration framework with
independently checkable certificates \cite{qayyum2026}. Inspection of
the released version-1.0.0 artifact sharpens the comparison. Its
hardened full-parameter theorem handles the actual \(c_v\) term without
an independence assumption, but only through its deterministic support
envelope; for ML-KEM-768 it uses \(\lvert c_v\rvert\le104\) and
certifies a dependency-preserving total upper bound with reported
\(\log_2\) value \(-127.4217902151193\). The same artifact separately
reproduces the familiar independent-rounding heuristic value near
\(-164.8116822526\) and explicitly labels that computation heuristic.
It also disclaims an exact or true ML-KEM decapsulation-failure
probability.

The purpose of this work is not to claim priority for ideal/coset
stratification, CRT methods, finite Fourier analysis,
anti-concentration, or independently checkable certificates. Those
ingredients are prior art. Our contribution is instead a quantitative
continuation of that framework: we retain an explicit full-ideal failure baseline while controlling
the proper-ideal correction, obtain an exact full-ideal tail for the
\(c_u\) contribution, and then resolve the actual joint \(c_u/c_v\)
dependence through a graph-coupled bivariate group-algebra transport,
rather than replacing \(c_v\) by its worst-case support envelope. This
produces a correctness upper bound at essentially the heuristic failure
scale.

\subsection{Scope}\label{scope}

The theorem is explicitly a theorem in a \textbf{ROM/CBD abstraction}.
Here ``ROM/CBD'' is project shorthand for two probability-model
primitives used only for the correctness analysis: (i) the distinct
domain-separated public-matrix generation streams are treated as
independent uniform ring elements, and (ii) the secret/noise polynomials
are treated as independent centered-binomial samples with independent
CBD2 coefficients. This is not a new random-oracle security reduction,
and it does \textbf{not} assert that the fixed SHAKE instantiation of
FIPS 203 is information-theoretically identical to the abstraction.
Compression, decompression, message encoding, and decoding remain the
exact audited integer maps of the ML-KEM-768 parameter set.

The probability experiment used below fixes an arbitrary 256-bit
message, samples the public matrix and all secret/noise variables
according to these ROM/CBD primitives, performs honest encryption and
decapsulation with the exact audited rounding maps, and asks whether the
returned key differs from the encapsulated key. The theorem is uniform
in the message and therefore applies in particular to the random message
generated during honest encapsulation. Within this scope, the final
result is an upper bound, not an exact DFR. The final step uses only a
union bound over 256 decoded coefficients; no independence between those
coefficients is assumed.

\begin{theorem}[ROM/CBD honest-correctness certificate]\label{thm:main}
Let $q=3329$, $n=256$, $k=3$, $\eta_1=\eta_2=2$, $d_u=10$, and $d_v=4$. Consider the ROM/CBD experiment described above, using the exact FIPS 203 compression, decompression, message-encoding, and message-decoding maps. Fix an arbitrary 256-bit message independently of the public matrix and all secret/noise draws, and perform honest ML-KEM-768 encryption and decapsulation. Then the exact rational certificate $P_*$ satisfies
\[
  \Pr[K'\neq K] \le P_* \le 2^{-164.81}.
\]
This is an upper bound in the stated abstraction. It is not an information-theoretic identity for fixed SHAKE, not an exact DFR, not a new IND-CCA reduction, and does not establish the adaptive correctness notion in which a message may be selected after observing public data.
\end{theorem}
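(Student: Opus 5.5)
The proof is a chain of reductions from the decapsulation-failure event to an exactly computable rational quantity; the final inequality is checked in exact arithmetic.

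\textbf{Step 1: reduce to coefficient failures.} Under honest encryption the re-encryption check in decapsulation succeeds whenever the decrypted message equals the encrypted one. Hence $\{K'\neq K\}\subseteq\{m'\neq m\}$. Writing the decryption residual coefficientwise, $m'_i\neq m_i$ holds iff the $i$-th coefficient of
\[
  w = e^{T} r - s^{T}(e_1 + c_u) + e_2 + c_v
\]
falls into the exact FIPS decoding-failure set for the fixed bit $m_i$. Here $c_u$ and $c_v$ are the compression errors of $u=A^{T}r+e_1$ and $v=t^{T}r+e_2+\mathrm{Encode}(m)$. Applying the union bound over the $256$ coordinates gives
\[
  \Pr[K'\neq K]\le \sum_{i=0}^{255}\Pr[\,w_i\in F_{m_i}\,].
\]
By rotational symmetry of the negacyclic ring under the ROM/CBD primitives, each summand reduces to a single-coordinate problem with an explicit dependence on the bit. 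I keep both values $m_i\in\{0,1\}$ and take the worse one, so the bound is uniform in $m$. Because $m$ is fixed independently of all draws, this step is legitimate.

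\textbf{Step 2: condition on the public data and split by ideals.} With $A$ uniform, $u$ is determined by $(A,r,e_1)$. The key point is that $c_u$ depends on $r$ only through $A^{T}r$. Conditional on $r$, the vector $A^{T}r$ is uniform on the ideal generated by the entries of $r$. I stratify over the CRT ideal lattice of $\mathbb{Z}_q[X]/(X^{256}+1)$, which splits into $128$ quadratic factors.

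\emph{Full-ideal stratum.} When the entries of $r$ generate the full ideal, $A^{T}r$ is uniform. Then $u$ is uniform and independent of $(r,e_1)$, so $c_u$ is an independent, exactly known rounding variable. The same graph argument applies to $v$ through $t=As+e$. This yields an exact joint law for $(c_u,c_v)$ coupled to the noise. The full-ideal reference tail is then a finite convolution of explicit distributions over $\mathbb{Z}_q$, computed exactly.

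\emph{Proper-ideal strata.} These are bounded by their total probability under CBD2. This is small, roughly a product over the factors in the set $T$ of ideals where $r$ vanishes modulo that factor. On these strata I compare the conditional law with the full-ideal reference by a bivariate Fourier transport on $\mathbb{Z}_q^2$. The partial-Fourier lemma converts the missing characters, those orthogonal to the proper ideal, into a total-variation error.

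\textbf{Step 3: close the cases and certify.} For $|T|\le2$ the character sums are enumerated exactly. For $|T|\ge3$ the stratum mass is tiny. I bound the resulting conditional density by an anti-concentration estimate over three CRT factors, replayed exhaustively over all factor triples. Finally, I assemble
\[
  P_* = 256\bigl(\text{full-ideal tail} + \text{proper-ideal corrections} + \text{rare-branch bound}\bigr)
\]
as an exact rational and verify $P_*\le 2^{-164.81}$ by comparing $P_*^{100}$ with $2^{-16481}$, or by interval arithmetic on $\log_2$.

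\textbf{Main obstacle.} I expect the proper-ideal Fourier transport to be the hardest step. The margin is only about $7\times10^{-4}$ bit, so every correction must cost well below $0.05\%$ relative to the baseline. A crude total-variation bound times the stratum mass would likely lose this margin. I would therefore first try weighting the missing-character contributions by the actual tail mass rather than by total variation alone, and only fall back on the three-factor anti-concentration replay for the residual $|T|\ge3$ branch.
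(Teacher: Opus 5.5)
There is a genuine gap in how you handle the full-ideal stratum. Your outer skeleton does match the paper:
\begin{itemize}
\item the containment $\{K'\neq K\}\subseteq\{m'\neq m\}$;
\item bit-specific FIPS failure sets with a 256-coordinate union bound;
\item CRT stratification of $I(r)$ into a full-ideal reference plus a Fourier transport for proper ideals;
\item the exhaustive three-factor anti-concentration for $|T|\ge3$;
\item the exact comparison (40).
\end{itemize}
The full-ideal stratum, however, is where the paper's main new step lives. There $u$, hence $c_u$, is indeed uniform and independent of $(r,e_1)$. But $c_v$ is not independent of it. With $X=A^Tr$, both $v=\mu+h+s^TX$ and $z=s^TR_{10}(u)$ are driven by the same $X$. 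The residual is therefore $\widetilde n=R_4(\mu+n'+z)-\mu-z$, a nonlinear function of the pair $(n',q_v)$ with $q_v=v-\mu$.

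The law of that pair is a 768-fold convolution of the bivariate primitive $(M,Q)$ on $\mathbb F_q^2$, plus the diagonal $h$. So the tail is not ``a finite convolution of explicit distributions over $\mathbb Z_q$''. A univariate exact convolution of that kind amounts to the independent-$c_v$ benchmark $B^{\rm ind}_b$, which does not bound the true probability. The sentence ``the same graph argument applies to $v$ through $t=As+e$'' does not produce the joint law. You also give no way to evaluate an exact 768-fold convolution on $q^2\approx1.1\cdot10^7$ states.

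The missing ingredient is Lemma~\ref{lem:c11b-tv}. Replacing the second coordinate by an independent uniform element reproduces $B^{\rm ind}_b$ exactly. A partial Fourier transform in the second coordinate bounds the total variation to the true graph-coupled law by $\frac12\sum_{\eta\ne0}\rho_\eta^{768}$. The rational certificates $\rho_{\pm1024}<853/1000$ and $\rho_\eta<3/5$ for the other modes then close this step. That is what yields $p_b\le b_b^{\rm graph}$ for both bits, and the certified ordering $b_0^{\rm graph}\le b_1^{\rm graph}$ used in (36). You attach the partial-Fourier lemma to the proper-ideal step instead. There the paper uses a different estimate: the affine identity (33) over $H^\perp$, the uniform bound $\|\widehat G_\xi\|_1<2/5$, and exact dual weight enumerators for $|T|=1,2$.

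Two smaller points. First, ``proper-ideal strata are bounded by their total probability'' cannot serve as a raw failure bound, since the $|T|=1$ stratum alone has mass about $128\,p_{\rm lin}^6\approx2^{-63.2}$. Only the transport \emph{discrepancy} may be charged to stratum mass (with TV at most one), and only for $|T|\ge3$, keeping the full-ideal baseline. Second, your diagnosis of the main obstacle is off. The $7\cdot10^{-4}$-bit margin is the slack of the final sum, not a per-correction budget. In the paper the event-universal affine correction, about $2^{-169.81}$, is roughly $3\%$ of $B^{\rm graph}_1$ and needs no tail-weighting. The dominant loss, about $0.15$ bit (from $165.01$ for $B^{\rm ind}_1$ to $164.857$ for $B^{\rm graph}_1$), comes from the graph-coupling term, which is exactly the step your plan does not supply.
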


\subsection{Contributions}\label{contributions}

The terminal proof has three publication-level contributions.

\begin{enumerate}
\item \textbf{Graph-coupled full-ideal reference for joint $c_u/c_v$.} We derive the exact bivariate primitive induced by the shared matrix-dependent quantity, prove the partial-Fourier-to-total-variation lemma used for 768 independent primitive contributions, and certify the two critical spectral modes by $853/1000$ and every other nonzero mode by $3/5$.
\item \textbf{Proper-ideal bivariate transport with certified rare-stratum control.} We transport the full-ideal reference through the affine ideal mixture, recompute the $|T|=1$ and $|T|=2$ corrections exactly/safely, and close $|T|\ge3$ with the three-factor Construction-A certificate and exhaustive Fincke--Pohst proof replay. The resulting global affine correction has exponent $169.8071173061\ldots$ bits.
\item \textbf{Bit-specific terminal certificate at the heuristic scale.} We use the exact FIPS bit-0 and bit-1 safe regions, order only their certified graph bounds, and apply a union bound over 256 coefficients. Primary exact reconstruction and an equal-strength secondary C11C assembly with independently derived theorem-dominant upstream inputs both certify $P_*\le2^{-164.81}$; $2^{-164.82}$ is not certified.
\end{enumerate}

Two auxiliary computations are deliberately separated from that dependency chain. The exact full-ideal $c_u$ tail is retained as an intermediate benchmark and diagnostic, but is not a term of the terminal formula. The retired low-weight census/closure development material is not part of the V8 publication capsule, is not a publication claim, and is not an input to Theorem~\ref{thm:main}.

The terminal fraction is checked by exact integer comparisons; decimal logarithms are presentation only.

\section{Related work and
positioning}\label{related-work-and-positioning}

\subsection{Dependency effects and concrete Kyber failure
analyses}\label{dependency-effects-and-concrete-kyber-failure-analyses}

The fact that decryption failures in Ring/Module-LWE/LWR constructions
can exhibit nontrivial dependence is established prior art
\cite{danvers2019}. Fang, Wang, and Zhao develop a tight analysis for
concrete Kyber public keys without imposing a blanket independence
assumption on the underlying error events \cite{fang2022}. Their
distribution-over-keys conclusions, however, rely on a sampled set of
matrices and density estimation; this is conceptually different from a
uniform/certified ROM average bound of the type pursued here.

Accordingly, we do not claim to be the first dependency-aware
Kyber/ML-KEM analysis.

\subsection{Machine-checked correctness and cryptographic
reductions}\label{machine-checked-correctness-and-cryptographic-reductions}

Almeida et al.~provide machine-checked correctness and IND-CCA security
of ML-KEM in EasyCrypt \cite{almeida2024}. Our work does not replace or
strengthen their IND-CCA theorem. We use only the deterministic
correctness implication that, for an honestly generated ciphertext,
recovery of the same K-PKE message causes decapsulation to rederive the
same coins, reconstruct the same ciphertext, and return the same shared
key. Our contribution is the concrete probability bound for
message-recovery failure in the stated stochastic abstraction.

\subsection{Provable versus heuristic correctness
bounds}\label{provable-versus-heuristic-correctness-bounds}

Barbosa et al.~formalize correctness bounds for lattice-based
cryptography and explicitly distinguish the heuristic ML-KEM rounding
distributions from their provable cryptographic correctness route
\cite{barbosa2025}. For ML-KEM-768, the familiar \(2^{-164}\)-scale
value belongs to the simplified model in which compression noise is
treated as arising from uniform inputs, whereas their provable route is
around \(2^{-80}\). Their analysis is the direct motivation for asking
whether the heuristic-scale behavior can be recovered without silently
replacing the actual dependency structure by independent rounding noise.

Barbosa et al. and the CRYPTREC technical assessment explicitly identify formal justification of the heuristic ML-KEM failure scale as an open problem \cite{barbosa2025,cryptrec2026}. The result here addresses a restricted form of that gap: it gives a dependency-preserving upper bound at the heuristic scale only in the fixed-message ROM/CBD abstraction of Theorem~\ref{thm:main}. It does not establish an information-theoretic fixed-SHAKE DFR theorem.

A recent probabilistic study by Yavas, Chen, and Kadlec computes
exact/FFT-based tails for centered-binomial aggregate-noise models and
compares them with classical concentration inequalities
\cite{yavas2026}. We regard that line as complementary: its purpose is
probabilistic tail modeling under simplified aggregate-noise
assumptions, whereas the present work is devoted to preserving the
algebraic dependence induced by the public matrix and by both
compression terms. CRYPTREC's 2026 technical assessment likewise records the heuristic/provable gap and states that formally proving the heuristic ML-KEM failure rate is considered an open problem \cite{cryptrec2026}. This reinforces the motivation for the restricted ROM/CBD result here without turning it into a fixed-SHAKE theorem.

Bajri\'c gives an exact Walsh--Hadamard spectral analysis of ML-KEM compression maps that depends explicitly on the chosen lift/Boolean representation \cite{bajric2026}. That representation-aware compression-map analysis is complementary rather than a collision with the present result: our target is a dependency-preserving correctness/failure upper bound in the ROM/CBD experiment, with a joint bivariate residual and ideal transport.

\subsection{Qayyum--Bezzateev and the shared ideal/Fourier
framework}\label{qayyumbezzateev-and-the-shared-idealfourier-framework}

Qayyum and Bezzateev's 2026 artifact is the closest methodological
neighbor \cite{qayyum2026}. It contains the same broad structural
ingredients that underlie the first half of our proof: the conditional
coset law for \(A^T r\), CRT ideal stratification, finite Fourier
analysis, anti-concentration via character energy, and independently
checkable computation. These ideas are therefore not claimed here as
new.

The released version-1.0.0 artifact makes the quantitative boundary
clear. Its revised full-parameter theorem uses ideal strata \(H_0\),
\(H_1\), and \(H_{\ge2}\), with Fourier anti-concentration on the
multi-factor branch and independently checked character-energy minima.
For the actual \(c_v\) contribution, however, it uses a deterministic
upper-bound envelope rather than a joint full-dimensional \(c_u/c_v\)
law: for ML-KEM-768 the certified margin is reduced by
\(\max\lvert c_v\rvert=104\) to 727. Its independent verifier reports
a dependency-preserving total \(\log_2\) upper bound of
\(-127.4217902151193\), certifying the stated \(2^{-127}\) threshold.
The artifact's separate reproduction near \(2^{-164.8116822526}\) is
explicitly an independent-rounding heuristic reproduction, not that
certified dependency-preserving theorem. Thus the ideal/coset
stratification, Fourier anti-concentration, orbit-energy reduction, and
independently checkable certificate architecture are prior art, whereas
the graph-coupled full-dimensional \(c_u/c_v\) closure at the heuristic
scale is not supplied by that artifact.

\subsection{What we claim, and what we do
not}\label{what-we-claim-and-what-we-do-not}

The defensible novelty claim is therefore narrow and quantitative: in a
stated ROM/CBD abstraction, we carry a dependency-preserving
ideal/Fourier analysis through exact \(c_u\) tails and graph-coupled
\(c_u/c_v\) transport to an upper bound below \(2^{-164.81}\). We do not
claim first use of ideals, CRT stratification, Fourier
anti-concentration, or independently auditable certificates.

\textbf{Table 1 --- Status-aware comparison.} Numerical exponents are
meaningful only together with their model/status.

\begin{longtable}[]{@{}
  >{\raggedright\arraybackslash}p{(\columnwidth - 6\tabcolsep) * \real{0.2500}}
  >{\raggedright\arraybackslash}p{(\columnwidth - 6\tabcolsep) * \real{0.2500}}
  >{\raggedright\arraybackslash}p{(\columnwidth - 6\tabcolsep) * \real{0.2500}}
  >{\raggedright\arraybackslash}p{(\columnwidth - 6\tabcolsep) * \real{0.2500}}@{}}
\toprule\noalign{}
\begin{minipage}[b]{\linewidth}\raggedright
Work
\end{minipage} & \begin{minipage}[b]{\linewidth}\raggedright
Model/status
\end{minipage} & \begin{minipage}[b]{\linewidth}\raggedright
Key distinction for this paper
\end{minipage} & \begin{minipage}[b]{\linewidth}\raggedright
ML-KEM-768 status
\end{minipage} \\
\midrule\noalign{}
\endhead
\bottomrule\noalign{}
\endlastfoot
FIPS 203 \cite{fips203} & Standard specification; reported
parameter-set scale & Reports the familiar scale but is not the
dependency-preserving proof developed here & about \(2^{-164.8}\)
reported scale \\
D'Anvers--Vercauteren--Verbauwhede \cite{danvers2019} & Dependency
analysis for Ring/Mod-LWE/LWR & Establishes that failure dependencies
matter & no directly comparable certified ML-KEM-768 exponent used
here \\
Fang--Wang--Zhao \cite{fang2022} & Concrete Kyber key analysis; sampled
distribution-over-keys step & Dependency-aware for fixed public keys &
not a uniform certified ROM-average bound \\
Almeida et al. \cite{almeida2024} & Machine-checked correctness and
IND-CCA framework & Formal cryptographic reductions, not a sharp
dependency-preserving DFR exponent & no comparable sharp exponent
claimed here \\
Barbosa et al. \cite{barbosa2025} & Formally verified heuristic and
provable routes & Explicitly separates simplified rounding from the
provable route & about \(2^{-80}\) provable route; \(2^{-164}\)-scale
heuristic model \\
Qayyum--Bezzateev \cite{qayyum2026} & Public certified ideal/Fourier artifact
in a related abstraction & Shared ideal/Fourier/anti-concentration foundation;
actual \(c_v\) handled by the deterministic \(\lvert c_v\rvert\le104\)
envelope, not by joint graph transport & dependency-preserving total
\(\log_2\) upper bound \(-127.42179\ldots\); heuristic reproduction
separately near \(-164.81168\) \\
This work & Explicit ROM/CBD correctness abstraction & Failure-aware
proper-ideal transport plus graph-coupled bivariate \(c_u/c_v\)
transport & \(P_*\le2^{-164.81}\) upper bound \\
\end{longtable}

\section{Model, notation, and correctness
event}\label{model-notation-and-correctness-event}

\subsection{Ring and parameters}\label{ring-and-parameters}

Let

\[
  q=3329,\qquad n=256,\qquad k=3,
\]

and

\[
  R_q = \mathbb F_q[X]/(X^{256}+1).
\]

For ML-KEM-768,

\[
  \eta_1=\eta_2=2,\qquad d_u=10,\qquad d_v=4.
\]

A CBD2 coefficient \(C\) takes values in \(\{-2,-1,0,1,2\}\) with counts

\[
  (1,4,6,4,1)/16.
\]

The vectors \(s,e,r,e_1\in R_q^3\) and \(e_2\in R_q\) are independent
samples in the ROM/CBD abstraction, with independent CBD2 coefficients.
The symbol \(r\) denotes the ephemeral vector called \(y\) in FIPS 203.

\textbf{Notation used throughout the proof.}

\begin{longtable}[]{@{}
  >{\raggedright\arraybackslash}p{(\columnwidth - 2\tabcolsep) * \real{0.5000}}
  >{\raggedright\arraybackslash}p{(\columnwidth - 2\tabcolsep) * \real{0.5000}}@{}}
\toprule\noalign{}
\begin{minipage}[b]{\linewidth}\raggedright
Symbol
\end{minipage} & \begin{minipage}[b]{\linewidth}\raggedright
Meaning
\end{minipage} \\
\midrule\noalign{}
\endhead
\bottomrule\noalign{}
\endlastfoot
\(A\) & public \(3\times3\) matrix over \(R_q\) \\
\(r\) & ephemeral vector (called \(y\) in FIPS 203) \\
\(I(r)\) & ideal generated by the three components of \(r\) \\
\(S(r),T(r)\) & active and inactive CRT-factor sets of \(I(r)\) \\
\(J=I(r)I(s)\) & product ideal governing the matrix-dependent part of
\(v\) \\
\(R_d\) & decompress-after-compress map at \(d\) bits \\
\(c_u,c_v\) & coefficientwise compression errors for \(d_u=10,d_v=4\) \\
\(n'\) & correctness residual before adding \(c_v\) \\
\(z=s^T R_{10}(u)\) & rounded secret-product term coupled to \(v\) \\
\(Q,Z\) & full-ideal scalar bivariate primitive coordinates \\
\(M=Q-Z\) & first coordinate used in the bivariate affine transport \\
\(P_*\) & final certified global upper bound \\
\end{longtable}

\subsection{ROM/CBD abstraction}\label{romcbd-abstraction}

The public matrix \(A\in R_q^{3\times3}\) is modeled as independent
uniform ring elements on the distinct domain-separated random-function
streams used to generate the matrix. This idealization is an assumption
of the theorem. It is not an information-theoretic theorem about the
fixed SHAKE implementation, nor does it analyze possible equalities or
collisions among finite seeds/inputs in that literal instantiation.

The model also treats the CBD samples as independent primitives. In
particular, this independence is part of the probability model rather
than a theorem deduced from the finite-seed SHAKE/PRF implementation. We
condition on well-formed inputs and exclude randomness-generation
failure, consistently with the honest-correctness experiment being
bounded.

\subsection{Compression maps}\label{compression-maps}

For \(d\in\{4,10\}\) define the exact FIPS integer maps

\[
  \operatorname{Compress}_q(x,d)
   = \left\lfloor \frac{2^d x}{q}\right\rceil \bmod 2^d,
\]

and

\[
  \operatorname{Decompress}_q(y,d)
   = \left\lfloor \frac{q y}{2^d}\right\rceil \bmod q.
\]

We write

\[
  R_d(x)=\operatorname{Decompress}_q(\operatorname{Compress}_q(x,d),d),
\]

and the centered compression error

\[
  c_d(x)=R_d(x)-x\pmod q.
\]

Thus \(c_u=c_{10}(u)\) and \(c_v=c_4(v)\) coefficientwise.

The displayed nearest-integer expressions are mathematical shorthand for
the exact audited FIPS integer maps. All boundary statements in this
paper, including the bit-specific safe intervals below, are obtained by
exhaustive evaluation of those integer maps; no theorem relies on an
unspecified real-number tie-breaking convention.

\subsection{K-PKE algebra}\label{k-pke-algebra}

The audited K-PKE equations are

\[
  t=As+e,
\]

\[
  u=A^T r+e_1,
\]

and

\[
  v=t^T r+e_2+\mu,
\]

where \(\mu\) is the message polynomial encoding.

For message bit \(b\in\{0,1\}\), write \(\mu_b\in\{0,1665\}\) for the
coefficient encoding and \(c_{v,b}=c_4(v_b)\), with
\(v_b=t^Tr+e_2+\mu_b\). After decompression, the centered
coefficientwise correctness residual is therefore bit-dependent:

\[
  \widetilde n_b
    =\langle e,r\rangle
     -\langle s,e_1\rangle
     -\langle s,c_u\rangle
     +e_2+c_{v,b}.
  \tag{1}
\]

The sign of \(c_{v,b}\) in (1) is positive. When the bit index is
irrelevant to an intermediate algebraic identity, we suppress it in the
notation; the terminal probability comparison below does not do so.

\subsection{Exact bit-specific decoding
regions}\label{exact-bit-specific-decoding-regions}

Exhaustive evaluation of the exact integer message-compression maps
gives the centered safe intervals

\[
  \mathcal S_0=[-832,832]
\]

for bit 0 and

\[
  \mathcal S_1=[-832,831]
\]

for bit 1. In the numerical residual-value space, the corresponding
failure sets are

\[
  \mathcal F_0=(-\infty,-833]\cup[833,\infty),
\]

\[
  \mathcal F_1=(-\infty,-833]\cup[832,\infty),
  \tag{2}
\]

and hence \(\mathcal F_0\subset\mathcal F_1\) as sets of numerical
values. This inclusion does \emph{not} order the true bit-specific
probabilities, because the evaluated residual itself depends on the bit:
\(\widetilde n_0\) and \(\widetilde n_1\) are different functions of the
same underlying sample. In particular, the exact regression witnesses
\((n',z)=(-900,-2117)\) and \((-900,-2705)\) exhibit opposite pointwise
failure orderings for the two bits. The executable regression is shipped
as \texttt{anc/code/bit\_specific\_regression.py}.

\subsection{From K-PKE message recovery to ML-KEM
decapsulation}\label{from-k-pke-message-recovery-to-ml-kem-decapsulation}

\textbf{Proposition 1 (deterministic correctness containment).} Consider
a ciphertext produced by honest ML-KEM encapsulation. If K-PKE.Decrypt
returns the same message \(m\) used by Encaps, then ML-KEM.Decaps
returns the same shared key as Encaps.

\textbf{Proof.} Encaps deterministically derives the candidate shared
key and K-PKE encryption randomness from the encapsulated message
together with the public-key hash. If Decaps recovers the same message,
it evaluates the same derivation inputs and therefore obtains the same
candidate key and the same encryption randomness. Re-encryption then
reconstructs the same honestly generated ciphertext, so the ciphertext
check succeeds and the implicit-rejection branch is not selected. Hence
the returned key equals the encapsulated key. \(\square\)

Therefore

\[
  \Pr[K'\neq K]
  \le
  \Pr[\text{K-PKE message recovery fails}].
  \tag{3}
\]

This is a deterministic event containment. It is not a new
Fujisaki--Okamoto or IND-CCA reduction.

\section{Ideal geometry}\label{ideal-geometry}

This section records only the geometry needed by the later Fourier argument.

\subsection{Conditional coset law}\label{conditional-coset-law}

For fixed \(r=(r_1,r_2,r_3)\in R_q^3\), define

\[
  I(r)=R_qr_1+R_qr_2+R_qr_3.
\]

The \(R_q\)-linear map

\[
  \Phi_r:R_q^3\to R_q,\qquad
  (a_1,a_2,a_3)\mapsto \sum_j a_jr_j
\]

has image exactly \(I(r)\). Since every fiber of a linear map between
finite additive groups has the same cardinality, a uniform matrix column
maps to a uniform element of \(I(r)\). Therefore, conditionally on
\((r,e_1)\),

\[
  u_i=e_{1,i}+W_i,
  \qquad W_i\sim\operatorname{Unif}(I(r)),
  \tag{4}
\]

and the three module components are conditionally independent because
they use independent matrix columns.

This ideal/coset structure is shared with the recent Qayyum--Bezzateev
framework and is not claimed as new.

\subsection{CRT decomposition}\label{crt-decomposition}

Let \(\zeta=17\in\mathbb F_q\) have order 256 and set

\[
  \alpha_\ell=\zeta^{2\ell+1},\qquad 0\le \ell<128.
\]

Then

\[
  X^{256}+1
  =\prod_{\ell=0}^{127}(X^2-\alpha_\ell),
\]

where every quadratic factor is irreducible over \(\mathbb F_q\). Hence

\[
  R_q\cong\prod_{\ell=0}^{127}K_\ell,
  \qquad K_\ell\cong\mathbb F_{q^2}.
  \tag{5}
\]

Each ideal is specified by an active subset of the 128 factors. We write
\(S(r)\) for the active set of \(I(r)\) and \(T(r)=S(r)^c\) for the
inactive set.

A useful marginal fact is that every nonzero ideal projects surjectively
onto every standard coefficient. Thus if \(r\neq0\), each individual
coefficient of \(A^T r\) is uniform in \(\mathbb F_q\). This marginal
uniformity does \textbf{not} imply joint independence of coefficients,
which is precisely why the ideal geometry matters.

\subsection{Auxiliary low-weight material from V2 (not a publication claim)}\label{auxiliary-low-weight-v2}

An earlier development branch contained an exhaustive low-weight census and CBD2 accessibility closures for a separate structural statement. Direct inspection of the terminal dependency graph shows that those outputs are not consumed by C10C, C11B, C11C, C12B, or C13A. We therefore remove that standalone statement and its historical computation from the publication capsule rather than presenting an unused result as a theorem dependency. No sampling result is substituted for the retired claim.

\section{Failure-aware Fourier
transport}\label{failure-aware-fourier-transport}

To make the logical status of the proof explicit, we use the following
categories throughout Sections 5--9.

\begin{longtable}[]{@{}
  >{\raggedright\arraybackslash}p{(\columnwidth - 4\tabcolsep) * \real{0.3333}}
  >{\raggedright\arraybackslash}p{(\columnwidth - 4\tabcolsep) * \real{0.3333}}
  >{\raggedright\arraybackslash}p{(\columnwidth - 4\tabcolsep) * \real{0.3333}}@{}}
\toprule\noalign{}
\begin{minipage}[b]{\linewidth}\raggedright
Status
\end{minipage} & \begin{minipage}[b]{\linewidth}\raggedright
Meaning
\end{minipage} & \begin{minipage}[b]{\linewidth}\raggedright
Representative example
\end{minipage} \\
\midrule\noalign{}
\endhead
\bottomrule\noalign{}
\endlastfoot
\textbf{Identity} & exact algebra in a finite ring/group & affine
Fourier identity (11), graph identity (28) \\
\textbf{Exhaustive certificate} & finite computation covering all
canonical cases & four-coordinate census, three-factor orbit/SVP
checks \\
\textbf{Rational inequality} & one-sided exact/certified numerical bound
& (16), (21), (34), terminal integer comparisons \\
\textbf{Norm/TV domination} & event-universal one-sided transport bound
& proper-ideal affine correction \\
\textbf{Union bound} & no independence assumption required & CRT-stratum
unions and the final 256-coordinate step \\
\end{longtable}

The terminal theorem is a composition of these one-sided statements; it
is never identified with an exact distributional equality.

\subsection{Pulling rounding back to the
ideal}\label{pulling-rounding-back-to-the-ideal}

Fix a nonzero \(r\) and set \(H=I(r)\). For one coefficient \(e\) of
\(e_1\) and one ideal coordinate \(w\), define

\[
  g_e(w)=e+c_u(e+w)=R_{10}(e+w)-w.
  \tag{9}
\]

The identity in (9) is exact for all \(5q\) input pairs. After averaging
a CBD2 secret coefficient \(s\), define

\[
  \psi_t(Z)=\mathbb E_s[Z^{st}],
\]

and

\[
  f_Z(w)=\mathbb E_{e,s}[Z^{s g_e(w)}].
\]

Negacyclic signs disappear under the symmetry of CBD2. For one
polynomial component, the conditional observable becomes a product over
256 standard coefficients,

\[
  \Phi_Z(w)=\prod_{b=0}^{255} f_Z(w_b).
  \tag{10}
\]

All dependence is thus concentrated in the fact that \(w\) is uniform on
\(H\), not on the full vector space.

\subsection{Affine Fourier identity}\label{affine-fourier-identity}

Let \(G=\mathbb F_q^{256}\) with normalized additive Fourier transform

\[
  \widehat\Phi(\xi)
    =|G|^{-1}\sum_{x\in G}\Phi(x)\chi_\xi(-x).
\]

For every additive subspace \(H\le G\) and shift \(a\),

\[
  \mathbb E_{W\sim\operatorname{Unif}(H)}[\Phi(a+W)]
  =\sum_{\xi\in H^\perp}\widehat\Phi(\xi)\chi_\xi(a).
  \tag{11}
\]

The zero mode is exactly the full-uniform reference. Every deviation
from that reference is therefore a nonzero-dual Fourier sum.

\subsection{Standard-coordinate description of the
dual}\label{standard-coordinate-description-of-the-dual}

For

\[
  f(X)=\sum_{t=0}^{255}f_tX^t,
\]

define, for each CRT factor,

\[
  E_\ell(f)=\sum_{j=0}^{127}f_{2j}\alpha_\ell^j,
  \qquad
  O_\ell(f)=\sum_{j=0}^{127}f_{2j+1}\alpha_\ell^j.
  \tag{12}
\]

If \(T=T(r)\) is the inactive CRT set, then

\[
  H=\bigcap_{\ell\in T}\ker E_\ell\cap\ker O_\ell,
\]

and hence

\[
  H^\perp
  =\operatorname{span}\{E_\ell,O_\ell:\ell\in T\}.
  \tag{13}
\]

Equivalently \(H^\perp=\star(\operatorname{Ann}(H))\). This makes the
Fourier correction a deterministic function of the inactive CRT set.

\subsection{One and two inactive
factors}\label{one-and-two-inactive-factors}

For a fixed inactive CRT factor, one parity half of one CBD2 polynomial
is a 128-step linear recurrence whose exact dynamic program gives

\[
  -\log_2 p_{\rm lin}=11.700873155140263\ldots .
\]

The factor is inactive in all three components of \(r\) with probability
\(p_{\rm lin}^6\), giving exponent \(70.205238930841578\ldots\) before
the union over the 128 factors. The source stratum is therefore much too
large to discard as a failure event. Its \textbf{Fourier correction},
however, is tiny because the dual enumerator has only weights 128 and
256.

For two inactive factors, the parity dual code is a two-frequency
exponential code. Its exact weight enumerator depends only on the
multiplicative order of the root ratio. The worst relative Fourier
correction remains far below the scale needed by the final tail. These
one- and two-factor enumerators are reused later in the bivariate
transport.

\subsection{Three inactive factors and certified
anti-concentration}\label{three-inactive-factors-and-certified-anti-concentration}

The remaining question is the probability that one CBD2 polynomial
vanishes on three prescribed CRT factors. For a triple
\(T=\{\ell_1,\ell_2,\ell_3\}\), parity separates and

\[
  P_T=p_T^2,
\]

where

\[
  p_T
  =\Pr\left[\sum_{j=0}^{127} C_j
       (\alpha_1^j,\alpha_2^j,\alpha_3^j)=0\right].
  \tag{14}
\]

The exact CBD2 characteristic function is

\[
  \phi(a)=\cos^4(\pi a/q)\ge0,
\]

so

\[
  p_T=q^{-3}\sum_{\lambda\in\mathbb F_q^3}
       \prod_{j=0}^{127}\phi(\lambda\cdot v_j).
  \tag{15}
\]

The \(\binom{128}{3}=341,376\) factor triples split into 2,667 orbits
under the odd-unit action modulo 256, each of size 128. The orbit counts
according to the generated ratio-order \(d\) are

\[
  1,6,28,120,496,2016
\]

for \(d=4,8,16,32,64,128\) respectively.

To avoid a \(q^3\) Fourier enumeration for every orbit, we use a
32-coordinate Construction-A lattice. For each triple, the first 32
values of the linear form define a \(q\)-ary \([32,3]\) code and an
integer lattice

\[
  L_T=\{z\in\mathbb Z^{32}:z\bmod q\in C_T^{(32)}\}.
\]

Exact Fincke--Pohst enumeration certifies that 2,666 of the 2,667
representatives have no nonzero lattice vector below squared radius
\(3,250,000\). The unique exception is \(T=\{0,16,35\}\), with global
block minimum

\[
  \lambda_1(L_T)^2=2,967,467.
\]

Exactly ten short codewords occur below the common radius.  We spell out the
32-to-128-coordinate step because it is theorem-critical.  Write
\(T=\{\ell_1,\ell_2,\ell_3\}\), set
\(\alpha_i=\zeta^{2\ell_i+1}\), and for
\(\lambda=(\lambda_1,\lambda_2,\lambda_3)\in\mathbb F_q^3\) define
\[
 t_j(\lambda)=\sum_{i=1}^3\lambda_i\alpha_i^j.
\]
For every block \(b\in\{0,1,2,3\}\) and \(0\le j<32\),
\[
 t_{j+32b}(\lambda)
   =\sum_{i=1}^3\bigl(\lambda_i\alpha_i^{32b}\bigr)\alpha_i^j
   =t_j\bigl(\lambda^{(b)}\bigr),
 \qquad
 \lambda_i^{(b)}=\lambda_i\alpha_i^{32b}.
\]
Since every \(\alpha_i\ne0\), the map
\(\lambda\mapsto\lambda^{(b)}\) is a bijection of
\(\mathbb F_q^3\) and preserves nonzeroness.  Thus each of the four
32-coordinate blocks of a nonzero mode is a codeword of the same
\(C_T^{(32)}\), after a bijective change of \(\lambda\).  For every
ordinary orbit, the exact negative certificate therefore gives energy
strictly greater than \(D=3,250,000\) in each block, hence total
energy strictly greater than \(4D=13,000,000\).

For the exceptional orbit \(T=\{0,16,35\}\), if all four blocks exceed
\(D\) the same conclusion holds.  Otherwise shift the low-energy block
to the first position.  The preceding bijection sends it to one of the
ten codewords exhaustively enumerated below radius \(D\).  A shift by
32 positions acts on the 128 values as a cyclic permutation, with the
sign change caused by \(\alpha_i^{128}=-1\) when wrapping; centered
squares are invariant under this signed permutation.  The independent
complete-short-set verification therefore reduces the exceptional case
to those same ten codewords.  Their minimum full-128 centered energy is
\[
  95,302,612>13,000,000.
\]
Consequently every nonzero Fourier mode, including the exceptional
orbit, has total centered energy exceeding \(13,000,000\). Rational
lower bounds on \(\pi\) and upper bounds on \(\ln 2\) then imply
uniformly

\[
  B_T(\lambda)<2^{-33}.
\]

Hence

\[
  P_T
  <
  \frac{31565511365591025}
       {1532436734853303498795276873030959104}
  <2^{-65}.
  \tag{16}
\]

This is the key input for the later \(|T|\ge3\) transport. In that
branch the final affine \textbf{correction} may still be bounded
event-universally by the source-stratum probability (TV at most one),
but the full-ideal failure baseline is retained; the proof therefore
does not replace the entire stratum's failure probability by the
source-stratum probability.

\section{\texorpdfstring{Intermediate benchmark: exact full-ideal \(c_u\) tail}{Intermediate benchmark: exact full-ideal c\_u tail}}\label{exact-c_u-tail}

\subsection{Full-ideal reference
distribution}\label{full-ideal-reference-distribution}

This section is an intermediate benchmark, not a term of the terminal inequality. Barbosa et al. report the familiar simplified-model ML-KEM-768 scale near $2^{-158}$ before the full $c_v$ effect is included \cite{barbosa2025}. The point here is the exact rational tail, agreement of two exact engines, and its diagnostic role in the transport architecture; the numerical value $158.85\ldots$ is not presented as a standalone state-of-the-art breakthrough.

When \(H=R_q\), each coefficient of \(u\) is uniform and the \(d_u=10\)
compression error has exact distribution

\[
  D_U=\frac1{3329}
  \{ -2:128,\;-1:1024,\;0:1024,\;1:1024,\;2:129\}.
  \tag{17}
\]

For the \(c_u\)-only correctness quantity used before reintroducing
\(c_v\), the full-ideal one-coordinate residual can be written as

\[
  N_{\rm ref}=\sum_{j=1}^{768}W_j+e_2,
  \tag{18}
\]

where the \(W_j\) are i.i.d. finite-support variables obtained from

\[
  W=X-Y,
\]

with \(X=ER\) for independent CBD2 \(E,R\) and \(Y=S(E_1+C_U)\) for
independent CBD2 \(S,E_1\) and \(C_U\sim D_U\).

The resulting \(W\) distribution has support \([-12,12]\) and
denominator \(218,169,344\). Its exact histogram is archived in the
artifact.

\subsection{Exact convolution}\label{exact-convolution}

The tail of (18) is computed by two independent exact methods:

\begin{enumerate}
\def\labelenumi{\arabic{enumi}.}
\tightlist
\item
  polynomial exponentiation by Kronecker packing with GMP and a
  carry-separation proof;
\item
  an exact coefficient recurrence for the power \(P_W(x)^{768}\).
\end{enumerate}

Both engines return the same rational number. For one coefficient,

\[
  -\log_2\Pr[|N_{\rm ref}|\ge728]
  =166.8714761099527094221349961610945\ldots,
\]

and after the union bound over 256 coefficients,

\[
  -\log_2 P_{\rm ref,global}
  =158.8714761099527094221349961610945\ldots .
  \tag{19}
\]

Thus the exact reference tail already meets the historical \(2^{-158}\)
\(c_u\) scale.

\subsection{Why scalar Chernoff was
insufficient}\label{why-scalar-chernoff-was-insufficient}

At the best scalar-Chernoff point used in the project, the
full-reference bound has exponent only

\[
  153.7570443700203925\ldots,
\]

losing

\[
  5.1144317399323169\ldots
\]

bits relative to the exact tail. This explains why a correct MGF
factorization alone could not reach the target.

\subsection{Formal Laurent-PGF
transport}\label{formal-laurent-pgf-transport}

Rather than transporting only one MGF evaluation, we retain the formal
coefficient distribution of the rounded term. For a fixed ideal input
\(w\), let

\[
  F_w(X)=\mathbb E_{e,s}[X^{s g_e(w)}]
        =\sum_k c_k(w)X^k.
\]

Equip Laurent polynomials with the weighted coefficient norm

\[
  \left\|\sum_k a_kX^k\right\|_Z
  =\sum_k|a_k|Z^k.
  \tag{20}
\]

This norm is submultiplicative and equals the MGF evaluation on a
genuine probability generating function. For any signed coefficient
discrepancy \(A(X)=\sum_k a_kX^k\) and any coefficient event \(E\), the
elementary inequality

\[
  \left|\sum_{k\in E}a_k\right|\le\sum_k|a_k|
\]

shows why coefficient \(\ell_1\) control dominates the discrepancy of
every tail event; submultiplicativity then propagates that control
through convolution. Parseval applied coefficientwise gives exact
variance bounds for the nonzero Fourier modes. Rational square-root
envelopes yield the uniform certificate

\[
  \delta_{L^1}<\frac34.
  \tag{21}
\]

Combining (21) with the exact one-factor and two-factor dual enumerators
and the three-factor certificate (16) transports the \textbf{tail}, not
merely the MGF, from the full ideal to the actual mixture of ideals. The
resulting absolute affine correction after the 256-coordinate union
bound satisfies

\[
  -\log_2\Delta_{c_u}>165.0337756267520487\ldots .
\]

Therefore

\[
  -\log_2 P_{c_u}
  =158.8514718210732256191632936792095\ldots>158.
  \tag{22}
\]

No probability of a proper ideal is added as a raw failure event.

\section{\texorpdfstring{Exact joint structure of \(c_u\) and
\(c_v\)}{Exact joint structure of c\_u and c\_v}}\label{exact-joint-structure-of-c_u-and-c_v}

\subsection{\texorpdfstring{Uniform-input \(c_v\)
reference}{Uniform-input c\_v reference}}\label{uniform-input-c_v-reference}

For \(U\) uniform in \(\mathbb F_q\), the \(d_v=4\) error

\[
  C_V=R_4(U)-U
\]

has exact distribution

\[
  \Pr[C_V=-104]=\frac8{3329},
\]

\[
  \Pr[C_V=j]=\frac{16}{3329}
  \quad(-103\le j\le103),
\]

and

\[
  \Pr[C_V=104]=\frac9{3329}.
  \tag{23}
\]

Thus \(|c_v|\le104\), recovering the deterministic envelope used by
earlier work, but (23) is only a marginal reference law; it does not
imply independence from \(c_u\).

\subsection{\texorpdfstring{Product ideal governing
\(v\)}{Product ideal governing v}}\label{product-ideal-governing-v}

Let

\[
  H=I(r),\qquad K=I(s).
\]

For fixed \(r,s,e,e_2,\mu\), the matrix-dependent term in \(v\) is

\[
  s^TA^Tr.
\]

Its image as \(A\) varies uniformly is exactly the product ideal

\[
  J=HK=I(r)I(s).
  \tag{24}
\]

Therefore

\[
  v\mid(r,s,e,e_2,\mu)
  \sim
  \operatorname{Unif}(\mu+e^Tr+e_2+J).
  \tag{25}
\]

Whenever \(J\neq0\), every standard coefficient of \(v\) is uniform in
\(\mathbb F_q\), even when \(J\) is proper. Nevertheless, this marginal
uniformity again does not imply independence from \(u\) or \(c_u\).

\subsection{Graph coupling}\label{graph-coupling}

Set

\[
  X=A^Tr,\qquad u=e_1+X,
\]

\[
  h=e^Tr+e_2,
\]

and

\[
  z=s^TR_{10}(u).
\]

The residual before \(c_v\) is

\[
  n'=h+s^TX-z.
  \tag{26}
\]

At the same time,

\[
  v=\mu+h+s^TX=\mu+n'+z.
  \tag{27}
\]

Consequently the final residual is exactly

\[
  \widetilde n
  =n'+c_v
  =R_4(\mu+n'+z)-\mu-z.
  \tag{28}
\]

Equation (28) is the central structural reduction for the joint rounding
problem. It shows why convolving an independent \(D_V\) with the \(c_u\)
residual is not justified: the same matrix-derived quantity \(X\)
determines both \(n'\) and \(z\).

\section{Bivariate reference and affine
transport}\label{bivariate-reference-and-affine-transport}

\subsection{Full-ideal bivariate
primitive}\label{full-ideal-bivariate-primitive}

In the full-ideal model, let \(U\) be uniform in \(\mathbb F_q\) and
\(S,E_1\) independent CBD2 variables. Define

\[
  Q=S(U-E_1),\qquad Z=SR_{10}(U),
\]

or equivalently

\[
  M=Q-Z=-S(E_1+c_u(U)).
  \tag{29}
\]

The exact primitive \((Q,Z)\) has denominator \(852,224\) and 13,197
nonzero states. We use \(M\) rather than \(A\) for the transformed first
coordinate to avoid any collision with the public matrix notation. Two
independent constructions produce identical integer tables. Applying the
exact bijection \((Q,Z)\mapsto(M,Q)=(Q-Z,Q)\) gives the primitive used
for the correctness pair.

For one output coefficient, the full-ideal pair

\[
  (n'\bmod q,\;q_v=v-\mu)
\]

is represented in the group algebra \(\mathbb Q[\mathbb F_q^2]\) as 768
convolutions of the transformed \((M,Q)\) primitive, together with the
diagonal contribution of \(h\).

A useful subtlety is that the marginal of \(v\) is exactly uniform
conditional on \(s\neq0\); the unconditional law retains the exact
branch

\[
  \Pr[s=0]=(3/8)^{768}.
\]

\subsection{\texorpdfstring{Independent-\(c_v\) benchmark and
graph-coupled
reference}{Independent-c\_v benchmark and graph-coupled reference}}\label{independent-c_v-benchmark-and-graph-coupled-reference}

If one \textbf{artificially} replaces \(c_v\) by an independent sample
from \(D_V\), exact convolution gives after the 256-coordinate union
bound

\[
  -\log_2 B^{\rm ind}_0=165.2448187342889453\ldots,
\]

\[
  -\log_2 B^{\rm ind}_1=165.0100815990650095\ldots .
\]

This is only a benchmark.

We now make the spectral-to-total-variation step explicit. Let $\mu(a,b)$ denote the transformed one-primitive probability mass function of $(M,Q)$ on $\mathbb F_q^2$. For $\eta\in\mathbb F_q$, define the partial Fourier slice in the second coordinate by
\[
  \widetilde\mu_\eta(a)=\sum_{b\in\mathbb F_q}\mu(a,b)\exp(-2\pi i\eta b/q),
  \qquad
  \mu(a,b)=\frac1q\sum_{\eta\in\mathbb F_q}\widetilde\mu_\eta(a)\exp(2\pi i\eta b/q),
\]
and set $\rho_\eta=\|\widetilde\mu_\eta\|_1=\sum_a|\widetilde\mu_\eta(a)|$.

\begin{lemma}[Partial Fourier control of the graph coupling]\label{lem:c11b-tv}
For one fixed output coefficient in the full-ideal reference experiment, let $\nu$ be the exact graph-coupled law after the 768 multiplicative contributions and the independent diagonal $h$-convolution, and let $\nu_0$ be the law obtained by retaining the first marginal while replacing the second coordinate by an independent uniform element of $\mathbb F_q$. Then
\[
  \operatorname{TV}(\nu,\nu_0)
  \le \frac12\sum_{\eta\ne0}\rho_\eta^{768}.
\]
\end{lemma}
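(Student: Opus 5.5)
The plan is to write $\nu$ as a convolution in $\mathbb Q[\mathbb F_q^2]$ and control its deviation from $\nu_0$ one slice at a time, using the partial Fourier transform in the second coordinate.

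First I fix the setup. In the full-ideal reference the pair is $\nu=\mu^{*768}*\kappa$, where $\kappa$ is the law of the diagonal contribution $(h,h)$ and $h$ is independent of the primitives. For any finitely supported signed measure $\sigma$ on $\mathbb F_q^2$, define $\widetilde\sigma_\eta(a)=\sum_b\sigma(a,b)e^{-2\pi i\eta b/q}$. This partial transform turns convolution in the second coordinate into pointwise multiplication and leaves the first coordinate as a convolution. Concretely, $\widetilde{(\sigma*\tau)}_\eta=\widetilde\sigma_\eta*\widetilde\tau_\eta$, where the right-hand side is a convolution of functions on $\mathbb F_q$ in the variable $a$. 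I would verify this directly from the definition.

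Second, I identify $\nu_0$ spectrally. Its first marginal is $\widetilde\nu_0$ evaluated at $\eta=0$, which equals $\widetilde\nu_0(a)$ at $\eta=0$, i.e. the first marginal of $\nu$. The uniform second coordinate kills every $\eta\neq0$ slice. By the inversion formula,
\[
\nu(a,b)-\nu_0(a,b)=\frac1q\sum_{\eta\ne0}\widetilde\nu_\eta(a)e^{2\pi i\eta b/q}.
\]
Applying the triangle inequality and summing over $b$ (there are $q$ values, which cancels the $1/q$) gives
\[
\|\nu-\nu_0\|_1\le\sum_{\eta\ne0}\|\widetilde\nu_\eta\|_1 .
\]
Hence $\operatorname{TV}(\nu,\nu_0)\le\frac12\sum_{\eta\ne0}\|\widetilde\nu_\eta\|_1$.

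Third, I bound each slice using submultiplicativity of $\ell^1$ under convolution, $\|f*g\|_1\le\|f\|_1\|g\|_1$. This gives
\[
\|\widetilde\nu_\eta\|_1\le\|\widetilde\mu_\eta\|_1^{768}\,\|\widetilde\kappa_\eta\|_1 = \rho_\eta^{768}\,\|\widetilde\kappa_\eta\|_1 .
\]
For the diagonal law, $\widetilde\kappa_\eta(a)=\Pr[h=a]\,e^{-2\pi i\eta a/q}$, so $\|\widetilde\kappa_\eta\|_1=1$. Substituting finishes the bound.

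The main obstacle is not analytic but bookkeeping. I must confirm that the 768 contributions are genuinely independent copies of the transformed $(M,Q)$ primitive in the reference experiment: full ideal, with the negacyclic signs absorbed by CBD2 symmetry as in Section~5. I must also confirm that $h$ enters purely diagonally and independently. This is where (29) and the graph identity (26)--(28) have to be matched carefully. If a coordinate convention such as $(n',q_v)$ versus $(M,Q)$ differs, I would handle it by a bijective linear change of variables, which preserves total variation. I would also apply it to the comparison law $\nu_0$, or else check that uniformity of the second coordinate is preserved under that change.
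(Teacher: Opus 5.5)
Your proposal is correct and follows essentially the same route as the paper. The shared steps are the partial Fourier transform in the second coordinate, the identification of the zero slice with the first marginal (so that $\nu_0$ is exactly the zero-mode term), inversion plus the triangle inequality to obtain $\|\nu-\nu_0\|_1\le\sum_{\eta\ne0}\|\widetilde\nu_\eta\|_1$, and $\ell^1$ submultiplicativity to reach $\rho_\eta^{768}$.

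The only difference is cosmetic. You absorb the diagonal $h$-law slice by slice via $\|\widetilde\kappa_\eta\|_1=1$. The paper instead first compares $\mu^{*768}$ with its zero mode, then uses that convolution by the $h$-law is an $\ell^1$ contraction mapping the first-marginal-times-uniform reference to itself.

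Your worry about coordinate conventions is also unnecessary: $(n',q_v)=(h,h)+\sum_j(M_j,Q_j)$ holds directly, so no change of variables is needed.
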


\begin{proof}
For a fixed standard output coefficient there are $kn=3\cdot256=768$ independent scalar primitive contributions in the full-ideal experiment: the relevant entries of the three secret polynomials and of the independently uniform full-ideal matrix images are independent under the ROM/CBD model. Negacyclic wraparound only changes signs. At the bivariate level those signs do not change the primitive law because CBD2 is symmetric under $S\mapsto-S$ and hence $(M,Q)$ and $(-M,-Q)$ have the same law.

Partial Fourier transform turns convolution in the second coordinate into slice-wise convolution in the first. The zero slice is $\widetilde\mu_0(a)=\sum_b\mu(a,b)$, the first marginal. After 768 convolutions, Fourier inversion therefore identifies the zero-mode contribution exactly with that first marginal tensored with the uniform law on the second coordinate, which is $\nu_0$ before the common diagonal convolution.

Subtracting the zero mode and applying Fourier inversion and the triangle inequality gives
\[
  \|\nu-\nu_0\|_1
   \le \sum_{\eta\ne0}\|\widetilde\mu_\eta^{(*768)}\|_1.
\]
The $\ell_1$ norm is submultiplicative under convolution, so
$\|\widetilde\mu_\eta^{(*768)}\|_1\le\rho_\eta^{768}$. Finally, convolution by the common diagonal law of $h$ is an $\ell_1$ contraction and maps the zero-mode reference to the same first-marginal-times-uniform reference after translation; it cannot enlarge the distance. Dividing the $\ell_1$ distance by two yields the claimed total-variation bound.
\end{proof}

Two independently organized exact rational interval certificates establish
\[
  \rho_{\pm1024}<\frac{853}{1000},
  \qquad
  \rho_\eta<\frac35
  \quad\text{for the other 3326 nonzero modes}.
\]
For scale only, the primary certificate reports the non-decisional upper diagnostics $\rho_{1024}<0.852925358017418$ and a largest other-mode upper diagnostic $\rho_{1281}<0.595483743750827$; the rational inequalities are the decision values. Lemma~\ref{lem:c11b-tv} therefore gives
\[
  \operatorname{TV}_{\rm graph}
  \le\frac12\left[2\left(\frac{853}{1000}\right)^{768}
                 +3326\left(\frac35\right)^{768}\right],
  \tag{30}
\]
with $-\log_2\operatorname{TV}_{\rm graph}>176.1656473680\ldots$ as a non-decisional diagnostic.

The resulting \textbf{full-ideal graph-coupled} bounds are

\[
  -\log_2 B^{\rm graph}_0>165.0658792859162109\ldots,
\]

\[
  -\log_2 B^{\rm graph}_1>164.8566361339541356\ldots .
  \tag{31}
\]

Let \(p_b\) denote the true one-coordinate failure probability in the
full-ideal graph-coupled model, let \(b_b^{\rm graph}\) be the certified
one-coordinate upper bound, and define
\(B_b^{\rm graph}=256b_b^{\rm graph}\). Equation (31) establishes the
ordering of the \emph{certified bounds},
\(b_0^{\rm graph}<b_1^{\rm graph}\), with
\(p_b\le b_b^{\rm graph}\) for each bit. It does not assert or require any ordering between the two true bit-specific probabilities.

\subsection{Bivariate affine
observable}\label{bivariate-affine-observable}

To transport (31) from the full ideal to \(H=I(r)\), define for one
input coefficient \(w\)

\[
  G_w(M,Q)
  =\operatorname{Law}_{E_1,S}
   \bigl(M=-S(E_1+c_u(E_1+w)),\;Q=Sw\bigr).
  \tag{32}
\]

Equivalently,

\[
  M=S(w-R_{10}(E_1+w)),\qquad Q=Sw.
\]

For a fixed output coefficient, the 768 multiplicative terms are formed
from three independent copies of the 256-coordinate \(H\)-average of
(32). The remaining \(h=e^Tr+e_2\) is a diagonal translation in the
output group; convolution by that law is an \(\ell_1\) contraction. Thus
no new joint census of \((I(r),h)\) is required.

\subsection{Bivariate Fourier identity and spectral
bound}\label{bivariate-fourier-identity-and-spectral-bound}

For normalized input Fourier coefficients

\[
  \widehat G_\xi
   =q^{-1}\sum_{w\in\mathbb F_q}\chi_\xi(-w)G_w,
\]

\(G_w\) is a probability measure on the \textbf{output} group
\(\mathbb F_q^2\) with coordinates \((M,Q)\), whereas the Fourier
transform above is taken over the \textbf{input} scalar
\(w\in\mathbb F_q\). Hence \(\widehat G_\xi\) is generally a
signed/complex group-algebra element on \(\mathbb F_q^2\); products over
standard input coordinates become convolutions in that output group. The
normalization is the same normalized additive convention as in (11).

The annihilator identity gives, for one polynomial component,

\[
  K_H
  =\sum_{\xi\in H^\perp}
    \mathop{*}_{b=0}^{255}\widehat G_{\xi_b}.
  \tag{33}
\]

For \(H=R_q\), only the zero mode remains.

For every nonzero input frequency, the \(S=0\) contribution vanishes by
character orthogonality. Pairing \((S,E_1,w)\) with \((-S,-E_1,-w)\)
reduces the remaining norm to an average of \(|\cos|\), except at four
exact rounding-boundary residues. Counting those defects exactly and
using the identity for the discrete \(|\cos|\) average gives

\[
  \|\widehat G_\xi\|_1
  \le
  \frac{5}{8q\sin(\pi/(2q))}+\frac{5}{2q}.
\]

With the rational inequalities \(\pi>333/106\) and \(\sin x>x-x^3/6\),

\[
  \|\widehat G_\xi\|_1
  <
  \frac{176091287861935481}{441720252460831626}
  <\frac25.
  \tag{34}
\]

An independent audit proves the same \(2/5\) threshold using the weaker
\(\pi>3.14\).

\subsection{Proper-ideal correction}\label{proper-ideal-correction}

Set \(\delta=2/5\). For one inactive CRT factor, the exact dual
enumerator is

\[
  W_1(x)=1+2(q-1)x^{128}+(q-1)^2x^{256}.
\]

Combined with the exact one-factor source probability, its weighted
affine correction has exponent exceeding \(219.12\) bits.

For two inactive factors, the worst parity enumerator is

\[
  W_{\rm par}(x)
   =1+2(q-1)x^{64}+(q-1)^2x^{128},
\]

with full enumerator \(W_2=W_{\rm par}^2\). The two-factor source
probability is bounded by reusing the three-factor certificate with the
third Fourier coefficient set to zero. The weighted contribution exceeds
\(197.72\) bits.

For at least three inactive factors, no spectral enumerator is needed:
(16), cubed across the three independent polynomials of \(r\) and
union-bounded over \(\binom{128}{3}\) triples, gives

\[
  -\log_2P_{\ge3,r\neq0}>177.807118758940241\ldots .
\]

The exact branch

\[
  r=0
\]

is retained separately with probability \((3/8)^{768}\). The branch
\(s=0\) has zero affine correction before the common diagonal
translation. Cases with \(r,s\neq0\) but \(I(r)I(s)=0\) are included in
the observable (32); no raw \(\Pr[I(r)I(s)=0]\) is added as a failure
event.

The logical roles of the strata are summarized below.

\begin{longtable}[]{@{}
  >{\raggedright\arraybackslash}p{(\columnwidth - 6\tabcolsep) * \real{0.2500}}
  >{\raggedright\arraybackslash}p{(\columnwidth - 6\tabcolsep) * \real{0.2500}}
  >{\raggedright\arraybackslash}p{(\columnwidth - 6\tabcolsep) * \real{0.2500}}
  >{\raggedright\arraybackslash}p{(\columnwidth - 6\tabcolsep) * \real{0.2500}}@{}}
\toprule\noalign{}
\begin{minipage}[b]{\linewidth}\raggedright
Inactive-factor branch
\end{minipage} & \begin{minipage}[b]{\linewidth}\raggedright
Source control
\end{minipage} & \begin{minipage}[b]{\linewidth}\raggedright
Transport control
\end{minipage} & \begin{minipage}[b]{\linewidth}\raggedright
Absolute correction scale
\end{minipage} \\
\midrule\noalign{}
\endhead
\bottomrule\noalign{}
\endlastfoot
\(|T|=1\) & exact one-factor probability & exact dual enumerator at
\(\delta=2/5\) & exponent \(>219.12\) \\
\(|T|=2\) & exact/certified two-factor probability & ratio-order dual
enumerator & exponent \(>197.72\) \\
\(|T|\ge3\), \(r\ne0\) & three-factor certificate + union bound &
event-universal TV \(\le1\) & exponent \(>177.807\) before global
combination \\
\(r=0\) & exact \((3/8)^{768}\) & separate branch & negligible at
theorem scale \\
\(s=0\) & exact branch & affine correction exactly zero before common
translation & zero correction \\
\end{longtable}

Thus the coarse \(|T|\ge3\) step bounds only the \textbf{transport
discrepancy} by its source probability; the full-ideal event probability
is still present in the baseline.

Combining all strata and then applying the 256-coordinate union bound
gives the event-universal affine correction

\[
  -\log_2\Delta_{\rm affine,global}
  =169.8071173061514638884937022909900\ldots .
  \tag{35}
\]

The recomputed branch diagnostics are $219.1266328576\ldots$ bits for $|T|=1$, $197.7286285332\ldots$ bits for $|T|=2$, $177.8071187589\ldots$ bits for $|T|\ge3$ with $r\ne0$, and $1086.7487994462\ldots$ bits for $r=0$. Thus the $|T|\ge3$ branch dominates the affine correction. These values are derived from the exact frozen fractions and are diagnostics rather than threshold decisions.

\section{Terminal bit-specific bound}\label{terminal-bit-specific-bound}

The broad symmetric surrogate \(|\widetilde n|\ge832\) is unnecessary
for the final theorem. For an arbitrary message
\(m=(m_1,\ldots,m_{256})\), the full-ideal message-failure probability
obeys the union bound

\[
\begin{aligned}
  \Pr[\text{message failure in the full-ideal model}]
  &\le \sum_{i=1}^{256}p_{m_i}\\
  &\le \sum_{i=1}^{256}b^{\rm graph}_{m_i}\\
  &\le 256b^{\rm graph}_1
   = B^{\rm graph}_1.
\end{aligned}
  \tag{36}
\]

The last inequality uses only the certified ordering
\(b^{\rm graph}_0\le b^{\rm graph}_1\) from C11B/C12B. No ordering of
the true probabilities \(p_0,p_1\) is assumed.

The correction (35) is event-universal, so the same global correction
may be added to the certified bit-1 graph-coupled bound:

\[
  P_*
  =B^{\rm graph}_{1,\,\text{global}}
   +\Delta_{\rm affine,global}.
  \tag{37}
\]

Let \(N_*/D_*\) denote the reduced numerator and denominator of (37),
stored in the artifact in canonical text form \texttt{N/D}. The
canonical digest is

\begin{Shaded}
\begin{Highlighting}[]
\scriptsize\NormalTok{46bcd82e73b61b2d0c37fc894a311fc1adf67f99ed756b9989866780fd51ea47}\normalsize
\end{Highlighting}
\end{Shaded}

and

\[
  -\log_2 P_*
  =164.810716201343121023033838332580034777995225649259958984\ldots .
  \tag{38}
\]

The threshold claims are checked without floating point. For
\(\lambda=164.8=824/5\),

\[
  N_*^5 2^{824}\le D_*^5,
  \tag{39}
\]

and for \(\lambda=164.81=16481/100\),

\[
  N_*^{100}2^{16481}\le D_*^{100}.
  \tag{40}
\]

The stronger \(164.82\) threshold fails, as expected for the present
certificate. The rigorous sensitivity calculation brackets the excess exponent above $164.81$ by $0.0007162013431210230\ldots$ bit. At the same target, the certified global affine correction has about $1.585\%$ relative slack; expressed at the three-factor source parameter $u_3$, the corresponding relative slack is about $0.526\%$. These diagnostics do not weaken the exact comparison, but they show that $164.81$ is a tight certificate rather than a perturbatively robust round number.

\[
  \Pr[K'\neq K]\le P_*\le2^{-164.81}.
\]

This is an upper bound obtained using exact finite computations,
rational inequalities, total-variation domination, and a union bound
over 256 coefficients; it is not an exact DFR.

\paragraph{Proof of Theorem~\ref{thm:main}.}
The exact bit-specific FIPS regression gives the safe regions $[-832,832]$ for bit 0 and $[-832,831]$ for bit 1. Lemma~\ref{lem:c11b-tv} and the rational C11B spectral certificates give $p_b\le b_b^{\rm graph}$ and the certified ordering $b_0^{\rm graph}\le b_1^{\rm graph}$. The bivariate affine identity and the $|T|=1,2,\ge3$, $r=0$, $s=0$, and $I(r)I(s)=0$ handling give the event-universal correction in (35), whose theorem-critical $|T|\ge3$ source bound is replayed by C10C. Hence (36)--(37) and the union bound over 256 coefficients give $\Pr[\text{K-PKE message failure}]\le P_*$. Honest K-PKE message recovery deterministically implies honest ML-KEM key recovery, so $\Pr[K'\ne K]\le P_*$. Finally the exact integer comparisons (39)--(40) certify $P_*\le2^{-164.81}$. The equal-strength secondary C11C assembly with independently derived theorem-dominant upstream inputs reconstructs the same terminal digest without reading the primary C11C output.

\section{Methods: computational and AI
systems}\label{methods-computational-and-ai-systems}

We disclose the computational workflow used to discover, challenge, and verify candidate arguments, and distinguish model-assisted exploration from theorem-level evidence. Different model families were used in separate exploratory and adversarial roles, interleaved with local exact computation, executable checks, checkpointing, and explicit cross-checks. We describe this as a Networked Heterogeneous Systems for Artificial Intelligence approach. It is an engineering workflow, not a cryptographic contribution or a formally validated scientific method.

\subsection{Author expertise and human
verification}\label{author-expertise-and-human-verification}

Both authors are engineers by training and work on the architecture of
AI systems. C.T.'s professional field is cybersecurity.
Neither author holds an academic appointment in lattice cryptography or formal methods. C.T. read the mathematical chain, phase
reports, counter-audits, and final manuscript step by step; A.D.
directed and ran the computational workflow and reproducibility
pipeline. Both authors selected which candidate arguments to retain and
are jointly responsible for the statements made here. This disclosure is
intended to make the verification boundary explicit, not to replace
specialist peer review. At the time this package was prepared, the
manuscript had not yet received specialist human peer review.

\subsection{Model-assisted exploration and adversarial
review}\label{model-assisted-exploration-and-adversarial-review}

OpenAI ChatGPT (GPT-5.6 Sol) was used for literature synthesis,
exploratory derivations, code generation and debugging, adversarial
review of candidate arguments, checkpoint construction, and manuscript
editing. Qwen3.5-9B was run locally as a secondary exploration engine. MathLab served as the MCP orchestrator coordinating model calls, exact computation, checkpointing, and reproducibility tasks. During post-exploration audit stages, Anthropic Claude (Opus 5)
and Mistral Medium 3.5 were used as distinct-family cross-checks for
counter-analysis, referee-style review, and audits of selected
derivations and computational outputs.

Iterative review cycles were used for defect discovery. An issue was considered closed only after human review and/or executable mathematical evidence, or was recorded as an open limitation; model agreement was not a proof criterion. These machine-assisted checks are not mathematical proof, specialist peer review, or statistically independent evidence, because model families can share data and failure modes. The theorem-level evidence is the explicit mathematical chain and the exact artifacts described below.

\subsection{Exact computation and proof
discipline}\label{exact-computation-and-proof-discipline}

The V8 release environment used Python 3.13.5 for arbitrary-precision integers, \texttt{fractions.Fraction}, parsing, and modular arithmetic; GCC/g++ 14.2.0 in C++17 mode for exhaustive finite engines; GNU MP 6.3.0 for the exact C10E engines; SymPy 1.14.0 where symbolic exact algebra is actually used; and mpmath 1.3.0 only for decimal diagnostics after exact quantities have been formed. Rational interval arithmetic is used for spectral decisions. Shell/build tools orchestrate the replay, and pdfTeX 1.40.26 (TeX Live 2025/dev) builds the manuscript. The release logs report observed local wall times rather than universal performance claims.

Every theorem-level accept/reject decision in the final chain is based
on integer, modular, rational, exact symbolic, or rigorously outward
rational-interval arithmetic. Floating-point values are presentation or
non-decisional diagnostics only. Exhaustive computations are used only
where the finite search space has first been reduced to a canonical,
explicitly enumerated family. Sampling is never used as proof. Critical
computational layers have an independently organized implementation or
counter-audit, and the terminal fraction has a third reconstruction path.

\section{Verification and
reproducibility}\label{verification-and-reproducibility}

All publication-critical code, compact data, certificates, and expected
results referenced by this manuscript are contained in the
\texttt{anc/} directory distributed with the source. No theorem claim in
this manuscript depends on a file outside that directory or on network
access.

\subsection{Verification levels and entry points}\label{reproducibility-entry-points}

Every verifier mode is read-only with respect to the frozen source tree. The driver disables Python bytecode generation, hashes the complete ancillary tree before and after each mode, and executes historical mutating producers only inside temporary copies. A successful mode therefore ends with \texttt{PASS\_FROZEN\_TREE\_IMMUTABILITY\_V8}.

From the root of a fresh arXiv-source extraction, the principal commands are
\begin{Shaded}
\begin{Highlighting}[]
\ExtensionTok{python3}\NormalTok{ anc/verify\_publication.py --quick}
\ExtensionTok{python3}\NormalTok{ anc/verify\_publication.py --full}
\ExtensionTok{python3}\NormalTok{ anc/verify\_publication.py --rebuild-c10c}
\ExtensionTok{python3}\NormalTok{ anc/verify\_publication.py --rebuild-c10e}
\end{Highlighting}
\end{Shaded}

\cliopt{quick} checks integrity, theorem-critical paths, the absence of optimization-sensitive assertions and machine-specific paths, the bit-specific regression, the canonical terminal digest, and the exact $164.8/164.81/164.82$ comparisons. \cliopt{full} recomputes the short/medium routes in a temporary work tree: C11B primary/counter, C11C primary, the V8 secondary route using the independently derived C10C value of $u_3$ and a fresh C11B counter result, the deliberately weaker sanity route, C12B, C13A, primary/counter sensitivity brackets, proof lints, and a fast in-memory corruption smoke. It does not run the long C10C search. \cliopt{red-team} runs the full six-mutation filesystem/data regression in both normal and optimized Python modes (12 rejection checks). \cliopt{rebuild-c10c} regenerates the orbit representatives, runs the primary exhaustive engine and a genuinely separate V8 exact interval-enumeration verifier on all 2,667 orbits, verifies the complete exceptional short set independently, and performs the rational derivation of $P_T$; it supports sharding/resume outside the frozen tree. \cliopt{rebuild-c10e} replays the two exact non-terminal benchmark engines.

\subsection{Artifact inventory}\label{artifact-inventory}

The detailed claim-to-artifact map is
\texttt{anc/ARTIFACT\_INDEX\_V8.md}. Principal entry points are:

\begin{longtable}[]{@{}p{0.43\columnwidth}p{0.53\columnwidth}@{}}
\toprule
Artifact & Role \\
\midrule
\endhead
\path{anc/verify_publication.py} & read-only V8 verifier and orchestrator \\
\path{anc/code/c11b_primary.py} / \path{anc/code/c11b_counter_audit.py} & C11B primary and independently organized counter routes, including distinct rational spectral certificates \\
\path{anc/results/r1dc11b/primary_result.json} / \path{anc/results/c11b_counter_v8.json} & canonical primary and current counter C11B outputs \\
\path{anc/code/r1dc10c_svp_exact_R3250000.cpp} & primary exact C10C Fincke--Pohst engine \\
\path{anc/code/c10c_independent_exact_verify_v8.cpp} & independently implemented exact V8 interval-enumeration verifier for all 2,667 C10C orbits and the exceptional complete short set \\
\path{anc/code/c10c_full_search_v8.py} & fresh orbit reconstruction plus sharded/resumable primary and independent C10C replay \\
\path{anc/code/r1dc10c_counter_audit.py} & direct representative binding and independent rational derivation of $P_T$ from V8 search/refinement evidence \\
\path{anc/code/r1dc10e_tail_pack.cpp} / \path{anc/code/r1dc10e_tail_recurrence.cpp} & exact non-terminal full-ideal $c_u$ benchmark engines \\
\path{anc/code/r1dc11c_primary.py} & primary bivariate affine transport \\
\path{anc/code/r1dc11c_equal_strength_counter_audit.py} & equal-strength V8 secondary assembly using the independently derived C10C $P_T$ and C11B-counter graph bounds \\
\path{anc/results/r1dc11c/secondary_counter_v8.json} & frozen V8 secondary-route result and scoped independence contract \\
\path{anc/code/r1dc12b_*.py} / \path{anc/code/r1dc13a_third_terminal_audit.py} & distinct terminal assemblies with explicitly documented shared certified upstreams \\
\path{anc/code/terminal_sensitivity_v8.py} & primary/counter exact sensitivity certificates \\
\path{anc/results/r1dc11c/terminal_sensitivity_primary_v8.json} / \path{anc/results/r1dc11c/terminal_sensitivity_counter_v8.json} & frozen V8 primary/counter sensitivity outputs with exact input hashes \\
\path{anc/results/terminal_fraction.json} & canonical terminal numerator/denominator and digest \\
\path{anc/ARTIFACT_INDEX_V8.md} / \path{anc/ALL_ARTIFACTS_INDEX_V8.json} / \path{anc/BUILD_MATRIX_V8.json} & claim map, exhaustive capsule inventory, and executable build/smoke contract \\
\path{anc/ENVIRONMENT.md} / \path{anc/SHA256SUMS.txt} & environment contract and ancillary integrity manifest \\
\bottomrule
\end{longtable}

\subsection{What is computationally
exhaustive}\label{what-is-computationally-exhaustive}

The theorem-critical exhaustive component is the 341,376-triple/2,667-orbit C10C Construction-A analysis. The primary Fincke--Pohst engine and a separately implemented exact V8 interval-enumeration verifier establish the finite negative-search facts; the independent complete-short-set verification establishes the exceptional minimum and full-128 energy; and a separate rational counter-audit converts those facts into the certified $P_T$ bound used downstream. The compact C10C checker is only a frozen-output binding/consistency check. C11B uses exact bivariate primitive tables and rational spectral certificates, while C11C/C12B/C13A use exact rational/integer reconstruction. The retained C10E benchmark uses two exact polynomial-tail engines but is non-terminal. Historical low-weight development material is excluded from the V8 publication capsule because it is not a theorem dependency or publication claim.

\section{Limitations and discussion}\label{limitations-and-discussion}

\subsection{ROM/CBD versus fixed SHAKE}\label{romcbd-versus-fixed-shake}

The largest conceptual limitation is the model boundary. The public
matrix and small-noise distributions are treated through the explicit
ROM/CBD abstraction of the theorem. This does not establish an
information-theoretic equality with the literal fixed-SHAKE
implementation of FIPS 203, nor does it separately account for every
possible equality/collision among finite seeds or PRF/XOF inputs in that
implementation. The result should therefore be cited as a certified
correctness bound \textbf{in the stated abstraction}.

\subsection{Upper bound versus exact
DFR}\label{upper-bound-versus-exact-dfr}

The final quantity \(P_*\) is not an exact probability. It contains
several one-sided steps: Fourier norm bounds, total-variation
domination, stratum union bounds, and finally a union bound over the 256
decoded message coefficients. None of these requires output-coordinate
independence, but they do make the result a certified upper bound.

\subsection{\texorpdfstring{Relation to FIPS's reported
\(2^{-164.8}\)}{Relation to FIPS's reported 2\^{}\{-164.8\}}}\label{relation-to-fipss-reported-2-164.8}

The numerical proximity between (38) and the \(2^{-164.8}\) value
reported by FIPS is useful context, but the logical status is different.
We do not infer our theorem from the FIPS estimate; instead, our
exact/rational chain produces a bound whose exponent happens to reach
that scale under the ROM/CBD assumptions.

\subsection{No new IND-CCA claim}\label{no-new-ind-cca-claim}

The article proves no new IND-CCA theorem and no new Fujisaki--Okamoto
reduction. Machine-checked cryptographic reductions are available
elsewhere \cite{almeida2024}. We bound a correctness event and use only
the deterministic containment (3) for honest encapsulations.

\subsection{Fixed-message versus adaptive correctness}\label{fixed-vs-adaptive}
The probability experiment in Theorem~\ref{thm:main} fixes an arbitrary message independently of the public matrix and all key/noise randomness. Some Fujisaki--Okamoto analyses use a stronger adaptive $\delta$-correctness notion in which an adversary may choose a message after observing public data. This work does not prove an automatic passage from the fixed-independent-message experiment to that adaptive notion.

\subsection{Tight threshold and proof boundary}\label{tight-threshold}
The $164.81$ certificate is exact but numerically tight: the terminal exponent has only about $0.0007162$ bit of margin and $164.82$ fails. The full human mathematical chain is not machine-checked in a proof assistant; executable checks certify the finite and rational subclaims described in the artifact map but do not replace specialist proof review. No automatic generalization to ML-KEM-512 or ML-KEM-1024 is claimed. The retired V2 low-weight standalone result is not needed by the terminal proof.

\subsection{Standard maintenance}\label{standard-maintenance}

The NIST publication page for FIPS 203 contains a planning note that an
issue will be corrected in a future revision \cite{fips203}. The
theorem is keyed to the audited algorithms, compression maps, and
decoding conventions. Any revision affecting those details requires a
targeted normative re-audit.

\subsection{Future research}\label{future-research}

Several extensions appear worthwhile, and none is needed for the theorem
proved here.

\begin{enumerate}
\def\labelenumi{\arabic{enumi}.}
\tightlist
\item
  \textbf{Bridge the ROM/CBD model to the fixed FIPS instantiation.} The
  cleanest conceptual next step is to quantify, rather than merely
  state, the gap between independent random-function streams/CBD
  primitives and the finite-seed SHAKE/PRF implementation. Any such
  result would have to account explicitly for domain separation, seed
  collisions, and the exact sampling interface.
\item
  \textbf{Machine-check the algebraic transport.} The finite
  certificates are independently checkable, but the full proof chain is
  not currently formalized in a proof assistant. Encoding the
  ideal/coset reduction, Fourier identities, event containments, and
  rational inequalities in EasyCrypt, Lean, Coq, or a comparable system
  would materially strengthen assurance.
\item
  \textbf{Extend the failure-aware transport to ML-KEM-512 and
  ML-KEM-1024.} The present paper deliberately treats only ML-KEM-768.
  The same architecture may transfer, but the low-weight and tail
  constants must be recomputed rather than assumed.
\item
  \textbf{Generalize beyond ML-KEM.} The combination of proper-ideal
  stratification with event-specific multivariate transport may be
  useful for other Ring/Module-LWE constructions where rounding noise is
  generated from algebraically coupled quantities.
\item
  \textbf{Reduce remaining one-sided slack.} The present result is an
  upper bound. Event-specific spectral information may permit a closer
  characterization of the actual ROM/CBD DFR, but such a refinement
  should remain secondary to preserving the dependency structure and
  proof auditability.
\end{enumerate}

\section{Conclusion}\label{conclusion}

We have given a dependency-preserving correctness analysis of ML-KEM-768 in an explicit ROM/CBD abstraction. The terminal argument combines a graph-coupled full-ideal bivariate reference, an explicit partial-Fourier/total-variation proof, proper-ideal affine transport with an exhaustively replayable three-factor rare-stratum bound, and exact bit-specific FIPS events. The low-weight V2 analysis and the exact full-ideal $c_u$ benchmark are not terminal dependencies. The final rational certificate satisfies

\[
  \Pr[K'\neq K]\le2^{-164.81}
\]

for honest decapsulation in the stated model. The equal-strength independent C11C route closes the same threshold, while the older weaker counter-route is retained only as a sanity check.

The most important qualification is also the simplest: this is not an
exact DFR and not an information-theoretic statement about the fixed
SHAKE instantiation. Within its declared model, however, the result
brings a dependency-preserving certified upper bound to the same
numerical scale as the heuristic failure value reported for ML-KEM-768,
while preserving the joint rounding structure rather than replacing it
by independent compression noise.

\section{Acknowledgements}\label{acknowledgements}

We are grateful to the authors of the works closest to ours for making
their analyses and, where available, reproducibility material public. In
particular, the papers of Jan-Pieter D'Anvers, Frederik Vercauteren, and
Ingrid Verbauwhede, and of Boyue Fang, Weize Wang, and Yunlei Zhao, made
the role of dependency effects in lattice-scheme failures especially
clear. The formal-verification work of José Bacelar Almeida and
coauthors, and of Manuel Barbosa, Matthias J. Kannwischer, Thing-han
Lim, Peter Schwabe, and Pierre-Yves Strub, provided essential reference
points for separating formal correctness statements from sharp concrete
failure estimates. Abdul Qayyum and Sergey V. Bezzateev's public
artifact was particularly valuable in clarifying which ideal/CRT/Fourier
ingredients are already established prior art and where our own
refinement begins. This acknowledgement does not imply that any of these
authors reviewed, endorsed, or is responsible for the present
manuscript.

\textbf{Generative AI disclosure.} OpenAI ChatGPT (GPT-5.6 Sol) was used for literature synthesis, exploratory derivations, code generation/debugging, adversarial review, reproducibility work, and manuscript editing. Qwen3.5-9B, Anthropic Claude (Opus 5), and Mistral Medium 3.5 were used in the roles described in Methods. All retained text, references, mathematical claims, and computational conclusions were reviewed by the authors; theorem-level numerical claims were accepted only through the explicit mathematical and executable verification chain. The authors take full responsibility for the manuscript.

\textbf{Funding.} This work received no external grant funding. Research costs were borne internally by netHsys SARL and Tommasini Conseil.

\textbf{Competing interests.} The authors declare no competing interests.

\section{Data and code availability}\label{data-and-code-availability}

All code, certificates, and compact data required to verify the
publication-critical computational claims are included in the
\texttt{anc/} verification capsule distributed with this manuscript. The
capsule contains deterministic verification commands and exact output
digests. Historical computations that do not support a V8 publication claim are
excluded from the submission capsule. No external proprietary dataset and no
network access are required for the theorem-critical proof checks. The same verification material is included with the arXiv source as ancillary
material in the \texttt{anc/} directory.

\section{Author contributions}\label{author-contributions}

Both authors defined the research programme, directed the exploration,
and decided which candidate results to retain. C.T. read and checked the
mathematical derivations, phase reports, and proof chain step by step,
participated in the construction of independent verification paths, and
prepared the manuscript. A.D. designed and ran the computational
workflow, managed the exact enumerations and checkpoint pipeline, and
participated in verification and manuscript preparation. Both authors
approved the submitted version and are jointly responsible for the
scientific claims and for the disclosure of computational and AI-system
use.

\section{Appendix A. Three-factor Construction-A
certificate}\label{appendix-a.-three-factor-construction-a-certificate}

For a CRT triple \(T\), the parity-half Fourier sequence is

\[
  t_j=\lambda_1\alpha_1^j+\lambda_2\alpha_2^j+\lambda_3\alpha_3^j.
\]

A block of 32 values defines a \(q\)-ary code of dimension three. In
systematic form \([I_3\mid A_T]\), the corresponding Construction-A
lattice has determinant \(q^{29}\). The primary Fincke--Pohst search and the separately implemented V8 exact interval-enumeration verifier certify the common radius for 2,666 orbits.
For the unique exception \(T=\{0,16,35\}\), the complete short set is
enumerated and its full 128-coordinate energy is computed explicitly.
The primary exceptional refinement and the independent V8 complete-short-set verifier reproduce the same ten short codewords and the same global block minimum \(2,967,467\).

\section{Appendix B. Exact rounding
distributions}\label{appendix-b.-exact-rounding-distributions}

\subsection{\texorpdfstring{\(d_u=10\)}{d\_u=10}}\label{d_u10}

For uniform \(U\in\mathbb F_{3329}\),

\[
  \Pr[c_u=-2]=128/3329,
\]

\[
  \Pr[c_u=-1]=\Pr[c_u=0]=\Pr[c_u=1]=1024/3329,
\]

\[
  \Pr[c_u=2]=129/3329.
\]

\subsection{\texorpdfstring{\(d_v=4\)}{d\_v=4}}\label{d_v4}

The exact distribution is given in (23). In particular
\(\max|c_v|=104\), but the graph-coupled analysis never upgrades the
marginal law to an independence assumption.

\section{Appendix D. Terminal
certificate}\label{appendix-d.-terminal-certificate}

The canonical terminal rational number is stored at
\path{anc/results/terminal_fraction.json}. Its SHA-256 digest for the
canonical text representation \texttt{N/D} is

\begin{Shaded}
\begin{Highlighting}[]
\scriptsize\NormalTok{46bcd82e73b61b2d0c37fc894a311fc1adf67f99ed756b9989866780fd51ea47}\normalsize
\end{Highlighting}
\end{Shaded}

The scripts \texttt{anc/code/r1dc12b\_primary.py},
\texttt{anc/code/r1dc12b\_counter\_audit.py}, and
\texttt{anc/code/r1dc13a\_third\_terminal\_audit.py} independently
reconstruct the terminal quantities used in (37)--(40). The decimal in
(38) is never used as a decision value.

\section{Appendix E. Artifact map}\label{appendix-e.-artifact-map}

The complete publication-facing claim map is shipped as
\texttt{anc/ARTIFACT\_INDEX\_V8.md}. Every path cited there is relative to
the source root and remains inside \texttt{anc/}; the verification
capsule has no proof dependency on project-history material, audit
prompts, or an external release archive.

  \bibliography{refs}

\end{document}